\newtheorem{theorem}{Theorem}
\newtheorem{corollary}{Corollary}
\newtheorem{lemma}{Lemma}
\newtheorem{remark}{Remark}
\newtheorem{definition}{Definition}
\newtheorem{proposition}{Proposition}
\DeclarePairedDelimiter{\floor}{\lfloor}{\rfloor}
\definecolor{winered}{rgb}{0.5,0,0}
\DeclareMathOperator*{\argmax}{\arg\!\max}
\title{\LARGE \bf A Communication-Efficient Algorithm for Exponentially Fast Non-Bayesian Learning in Networks}
\author{Aritra Mitra, John A. Richards, and Shreyas Sundaram
\thanks{A. Mitra, and S. Sundaram are with the School of Electrical and Computer Engineering at Purdue University. J. A.  Richards is with Sandia National Laboratories.   Email: {\tt \{mitra14, sundara2\}@purdue.edu},  {\tt{jaricha@sandia.gov}}. This work was supported in part by NSF CAREER award
1653648, and by a grant from Sandia National Laboratories. Sandia National Laboratories is a multimission laboratory managed and operated by National Technology \& Engineering Solutions of Sandia, LLC, a wholly owned subsidiary of Honeywell International Inc., for the U.S. Department of Energy's National Nuclear Security Administration under contract DE-NA0003525. The views expressed in the article do not necessarily represent the views of the U.S. Department of Energy or the United States Government.}}
\begin{document}
\maketitle
\thispagestyle{empty}
\pagestyle{empty}
\begin{abstract}
    We introduce a simple time-triggered protocol to achieve communication-efficient  non-Bayesian learning over a network. Specifically, we consider a scenario where a group of agents interact over a graph with the aim of discerning the true state of the world that generates their joint observation profiles. To address this problem, we propose a novel distributed learning rule wherein agents aggregate neighboring beliefs based on a min-protocol, and the inter-communication intervals grow geometrically at a rate $a \geq 1$. Despite such sparse communication, we show that each agent is still able to rule out every false hypothesis exponentially fast with probability $1$, as long as $a$ is finite. For the special case when communication occurs at every time-step, i.e., when $a=1$, we prove that the asymptotic learning rates resulting from our algorithm are network-structure independent, and a strict improvement upon those existing in the literature. In contrast, when $a>1$, our analysis reveals that the asymptotic learning rates vary across agents, and exhibit a non-trivial dependence on the network topology coupled with the relative entropies of the agents' likelihood models. This motivates us to consider the problem of allocating signal structures to agents to maximize appropriate performance metrics. In certain special cases, we show that the eccentricity centrality and the decay centrality of the underlying graph help identify optimal allocations; for more general scenarios, we bound the deviation from the optimal allocation as a function of the parameter $a$, and the diameter of the communication graph.
\end{abstract}
\section{Introduction}
A typical problem in networked systems involves a global task that needs to be accomplished by a group of entities or agents via local computations and information exchanges over the network. These agents, however, are typically endowed with partial information about the state of the system; as such, inter-agent communication becomes indispensable for achieving the common goal. Given this premise, it is natural to ask: how frequently must the agents communicate to solve the desired problem? Owing to its practical relevance, the question posed above has received significant recent interest by the control system, information theory and machine learning communities in the context of a variety of problems, namely average consensus \cite{olshevsky}, optimization \cite{opt1,opt2,opt3}, and static parameter estimation \cite{sahu}. Our goal in this paper is to extend such investigations to the problem of non-Bayesian learning in a network, also known as the distributed hypothesis testing problem \cite{jad1,jad2,shahin,nedic,lalitha,mitraACC19}. Specifically, the global task in this setting involves learning the true state of the world (among a finite set of hypotheses) that explains the private observations of each agent in the network. Two notable features that are specific to this problem are as follows. Unlike consensus or distributed optimization, agents are privy to exogenous signals, which, if informative, can enable them to eliminate a subset of the false hypotheses exponentially fast.
A related problem where agents receive exogenous signals (measurements) is that of distributed state estimation \cite{martins,mitraTAC} where the global task entails tracking potentially unstable dynamics. In contrast, the true state of the world remains fixed over time in our setting, considerably simplifying the objective. These attributes play in favor of the problem at hand, motivating us to ask the following questions. (i) Can we design an algorithm that enables each agent to  learn the truth with sparse communication schedules (and in fact, even sparser than typically employed for other classes of distributed problems)? (ii) If so, how fast do the agents learn the truth? (iii) Can we quantify the trade-off(s) between sparsity in communication and the rate of learning? To the best of our knowledge, these questions remain largely unexplored. In this paper, we take a preliminary step towards responding to them via the following \textbf{contributions}.

We develop and analyze a simple time-triggered learning rule that builds on our recent work on distributed hypothesis testing \cite{mitraACC19}. Specifically, the data-aggregation step of our algorithm involves a min-protocol as opposed to the consensus-based averaging schemes intrinsic to existing linear \cite{jad1,jad2} and log-linear \cite{shahin,nedic,lalitha} learning rules. The basic strategy we employ to achieve communication-efficiency is in line with those proposed in \cite{olshevsky,opt1,sahu}, where inter-agent communications become progressively sparser as time evolves. In particular, the authors in \cite{olshevsky} and \cite{opt1} explore deterministic rules where the inter-communication intervals grow logarithmically and polynomially in time, respectively. In contrast, the authors in \cite{sahu} propose a rule where at each time-step, an agent communicates with its neighbors in the graph with a probability that decays to zero at a sub-linear rate. In essence, these approaches establish that as long as the inter-communication intervals do not grow too fast, the global task can still be achieved. We depart from these approaches by allowing the inter-communication intervals to grow much faster: at a  geometric rate $a\geq1$, where the parameter $a$ can be adjusted to control the frequency of communication. While more refined approaches to achieve communication-efficiency are  conceivable, we show that our simple time-triggered protocol yields strong guarantees. Specifically, we prove that even with an arbitrarily large $a$ (which leads to a highly sparse communication schedule), each agent is still able to learn the truth with probability $1$, provided $a$ is finite. Furthermore, we establish that such learning occurs exponentially fast, and characterize the limiting error exponents as a function of certain parameters of our model, and the constant $a$. In particular, our characterization quantifies the trade-offs between communication-efficiency and the speed of learning for the specific problem under consideration.  

Our analysis subsumes the special case when communication occurs at every time-step, i.e., when $a=1$, which corresponds to the scenario studied in our previous work \cite{mitraACC19}. While the general approach in \cite{mitraACC19} was shown to be robust to worst-case adversarial attack models, a convergence-rate analysis of the same was missing. A significant contribution of this paper is to fill this gap by establishing that when $a=1$, \textit{the asymptotic learning rates resulting from our proposed algorithm are network-structure independent, and a strict improvement over the rates provided by  existing algorithms in the literature}. In contrast, when $a>1$, we show that the asymptotic learning rates differ from agent to agent, and depend not only on the relative entropies of the agents' signal models, but also on properties of the underlying network. Given this result, we introduce two new measures of the quality of learning, and study the problem of allocating signal structures to agents to maximize such measures. In certain special cases, we show that the eccentricity centrality and the decay centrality of the communication network play key roles in identifying the optimal allocations. For more general cases, we bound the deviation from the optimal allocation as a function of the parameter $a$, and the diameter of the graph.

\section{Model and Problem Formulation}
\label{sec:model}
\textbf{Network Model:} We consider a setting comprising of a group of agents $\mathcal{V}=\{1,2,\ldots,n\}$. At certain specific time-steps (to be decided by a time-triggered communication schedule), these agents
interact with each other over a directed graph $\mathcal{G}=(\mathcal{V},\mathcal{E})$. An edge $(i,j)\in\mathcal{E}$ indicates that agent $i$ can directly transmit information to agent $j$; in such a case, agent $i$ will be called a neighbor of agent $j$. The set of all neighbors of agent $i$ will be denoted $\mathcal{N}_i$. For a strongly-connected graph $\mathcal{G}$, we will use $d(i,j)$ to denote the length of the shortest directed path from agent $i$ to agent $j$, and $\bar{d}({\mathcal{G}})$ to denote the diameter of the graph.\footnote{A graph is said to be strongly-connected if it has a directed path between every pair of agents; the diameter of such a graph is the length of the longest shortest path between the agents.}

\textbf{Observation Model:} Let $\Theta=\{\theta_1,\theta_2,\ldots,\theta_m\}$ denote $m$ possible states of the world, with each state representing a hypothesis. A specific state $\theta^{\star}\in\Theta$, referred to as the true state of the world, gets realized. Conditional on its realization, at each time-step $t\in\mathbb{N}_{+}$, every agent $i\in\mathcal{V}$  privately observes a signal $s_{i,t}\in\mathcal{S}_i$, where $\mathcal{S}_i$ denotes the signal space of agent $i$.\footnote{We use $\mathbb{N}$ and $\mathbb{N}_{+}$ to refer to the set of non-negative integers and positive integers, respectively.} The joint observation profile so generated across the network is denoted ${s}_{t}=(s_{1,t},s_{2,t},\ldots,s_{n,t})$, where $s_t\in\mathcal{S}$, and $\mathcal{S}=\mathcal{S}_1\times\mathcal{S}_2\times\ldots \mathcal{S}_n$. 
Specifically, the signal $s_{t}$ is generated based on a conditional likelihood function $l(\cdot|\theta^{\star})$, the $i$-th marginal of which is denoted $l_i(\cdot|\theta^{\star})$, and is available to agent $i$. The signal structure of each agent $i\in\mathcal{V}$ is thus characterized by a family of parameterized marginals $l_i=\{l_i(w_i|\theta): \theta\in\Theta, w_i\in\mathcal{S}_i\}$. We make certain standard assumptions \cite{jad1,jad2,shahin,nedic,lalitha}: (i) The signal space of each agent $i$, namely $\mathcal{S}_i$, is finite. (ii) Each agent $i$ has knowledge of its local likelihood functions $\{l_i(\cdot|\theta_p)\}_{p=1}^{m}$, and it holds that $l_i(w_i|\theta) > 0, \forall w_i\in\mathcal{S}_i$, and $\forall \theta \in \Theta$. (iii) The observation sequence of each agent is described by an i.i.d. random process over time; however, at any given time-step, the observations of different agents may potentially be correlated. (iv) There exists a fixed true state of the world $\theta^{\star}\in\Theta$ (that is unknown to the agents) that generates the observations of all the agents. The probability space for our model is denoted $(\Omega,\mathcal{F},\mathbb{P}^{\theta^{\star}})$, where $\Omega\triangleq\{\omega: \omega=(s_1,s_2,\ldots), \forall s_t\in\mathcal{S}, \forall t \in \mathbb{N}_{+}\}$, $\mathcal{F}$ is the $\sigma$-algebra generated by the observation profiles, and $\mathbb{P}^{\theta^{\star}}$ is the probability measure induced by sample paths in $\Omega$. Specifically, $\mathbb{P}^{\theta^{\star}}=\prod \limits_{t=1}^{\infty}l(\cdot|\theta^{\star})$. We will use the abbreviation a.s. to indicate almost sure occurrence of an event w.r.t. $\mathbb{P}^{\theta^{\star}}$.

Given the above setting, the goal of each agent in the network is to eventually learn the true state of the world $\theta^{\star}$. However, the signal structure of any given agent is in general only partially informative, thereby precluding this task from being achieved by any agent in isolation. Specifically, let $\Theta^{\theta^{\star}}_i\triangleq\{\theta\in\Theta : l_i(w_i|\theta)=l_i(w_i|\theta^{\star}), \forall w_i\in\mathcal{S}_i\}$  represent the set of hypotheses that are \textit{observationally equivalent} to the true state $\theta^{\star}$ from the perspective of agent $i$. An agent $i$ is deemed partially informative about the truth if  $|\Theta^{\theta^{\star}}_i| > 1$. Since potentially every agent can be partially informative in the sense described above, inter-agent communications become necessary for each agent to learn the truth. 

In this context, our \textbf{objectives} in this paper are to develop an understanding of (i) the amount of leeway that the above problem affords in terms of sparsifying inter-agent communications without compromising the objective of learning the truth, and (ii) the trade-offs between sparse communication and the rate of learning. To this end, we recall the following definition from \cite{mitraACC19} that will prove useful in our subsequent developments. 

\begin{definition}(\textbf{Source agents}) An agent $i$ is said to be a source agent for a pair of distinct hypotheses $\theta_p,\theta_q\in\Theta$ if it can distinguish between them, i.e., if  $D(l_i(\cdot|\theta_p)||l_i(\cdot|\theta_q)) > 0$, where $D(l_i(\cdot|\theta_p)||l_i(\cdot|\theta_q))$ represents the KL-divergence \cite{cover} between the distributions $l_i(\cdot|\theta_p)$ and $l_i(\cdot|\theta_q)$. The set of source agents for pair $(\theta_p,\theta_q)$ is denoted $\mathcal{S}(\theta_p,\theta_q)$.
\end{definition}

Throughout the rest of the paper, we will use $K_i(\theta_p,\theta_q)$
as a shorthand for $D(l_i(\cdot|\theta_p)||l_i(\cdot|\theta_q))$.

\section{A Communication-Efficient Learning Rule}
\label{sec:Algo}
In this section, we formally introduce a simple time-triggered belief update rule parameterized by a constant $a\in\mathbb{N}_{+}$ that determines the frequency of communication (to be made more precise below). In order to collaboratively learn the true state of the world, every agent $i$ maintains a local belief vector $\boldsymbol{\pi}_{i,t}$, and an actual belief vector $\boldsymbol{\mu}_{i,t}$, each of which are probability distributions over the hypothesis set $\Theta$. These vectors are initialized with $\pi_{i,0}(\theta)>0,\mu_{i,0}(\theta)>0, \forall \theta\in\Theta,\forall i\in\mathcal{V}$ (but otherwise arbitrarily), and subsequently updated as follows. 
\begin{itemize}
\item \underline{\textbf{Update of the local beliefs}:} At each time-step $t+1\in\mathbb{N}_{+}$, the local belief vectors are updated based on a standard Bayesian rule:
\begin{equation}
\pi_{i,t+1}(\theta)=\frac{l_i(s_{i,t+1}|\theta)\pi_{i,t}(\theta)}{\sum  \limits_{p=1}^{m} l_i(s_{i,t+1}|\theta_p)\pi_{i,t}(\theta_p)}.
\label{eqn:Bayes}
\end{equation}
\item \underline{\textbf{Update of the actual beliefs}:}
Let $\mathbb{I}={\{t_k\}}_{ k\in\mathbb{N}_{+}}$ denote a sequence of time-steps satisfying $t_{k+1}-t_k=a^k, \forall k\in\mathbb{N}_{+}$, with $t_1=1$. If $t+1\in\mathbb{I}$, then $\boldsymbol{\mu}_{i,t+1}$ is updated as
\begin{equation}
\mu_{i,t+1}(\theta)=\frac{\min\{\{\mu_{j,t}(\theta)\}_{{j\in\mathcal{N}_i}},\pi_{i,t+1}(\theta)\}}{\sum\limits_{p=1}^{m}\min\{\{\mu_{j,t}(\theta_p)\}_{{j\in\mathcal{N}_i}},\pi_{i,t+1}(\theta_p)\}}.
\label{eqn:update1}
\end{equation}
If $t+1\notin\mathbb{I}$,  $\boldsymbol{\mu}_{i,t+1}$ is simply held constant as follows:
\begin{equation}
    \mu_{i,t+1}(\theta)=\mu_{i,t}(\theta).
    \label{eqn:update2}
\end{equation}
\end{itemize}
In words, while the local beliefs are updated at every time-step, the actual beliefs are updated only at time-steps that belong to the set $\mathbb{I}$, i.e., an agent $i\in\mathcal{V}$ is allowed to transmit  $\boldsymbol{\mu}_{i,t}$ to its out-neighbors, and receive $\boldsymbol{\mu}_{j,t}$ from each in-neighbor $j$ in $\mathcal{G}$ if and only if $t+1\in\mathbb{I}$. When $a=1$, the actual beliefs get updated as per \eqref{eqn:update1} at \textit{every} time-step, and we recover the rule proposed in  \cite{mitraACC19}. When $a>1$, note that the inter-communication intervals grow exponentially at a rate dictated by the parameter $a$. Our goal in this paper is to precisely characterize the impact of such sparse communication on the asymptotic rate of learning of each agent. Prior to doing so, a few comments are in order.

First, notice that the data-aggregation rule in \eqref{eqn:update1} is based on a min-protocol, as opposed to any form of ``belief-averaging" commonly employed in the existing distributed learning literature \cite{jad1,jad2,shahin,nedic,lalitha}. Essentially, while the local belief updates \eqref{eqn:Bayes} capture what an agent can learn by itself, the actual belief updates \eqref{eqn:update1} incorporate information from the rest of the network. As demonstrated by Corollary \ref{thm:corollary} in the next section, when $a=1$, such a min-protocol yields better asymptotic learning rates than all existing schemes. This motivates us to use a belief update rule of the form \eqref{eqn:update1} for studying the case when $a>1$. 
Second, we note that the proposed time-triggered protocol is simple, easy to implement and computationally cheap. At the same time, the exponentially growing intervals afford a much sparser communication schedule relative to related literature. Third, while one can potentially consider extensions of this algorithm that account for asynchronicity, communication failures, delays etc., we focus on the scheme here in order to (i) concretely isolate the trade-off between sparse communication and the quality of learning as measured by the asymptotic learning rates, and (ii) provide insights into how the network structure impacts such rates. A final comment needs to made regarding the choice of achieving communication-efficiency by cutting down on communication rounds as opposed to truncating the number of bits exchanged per communication round, an approach pursued in quantization-based schemes \cite{suresh}. As argued in \cite{opt2}, communication latency acts as the bottleneck of overall performance and dominates message-size dependent transmission latency when it comes to transmitting small messages, such as the $m$-dimensional actual belief vectors in our setting. This justifies our sparse communication scheme.
With these points in mind, we proceed to the analysis of the algorithm developed in this section. 
\section{Main Result and Discussion}
The main result of the paper is as follows; the proof of this result is presented in Section \ref{sec:proofs}. 
\begin{theorem}
Suppose the communication parameter satisfies $a>1$, and the following conditions are met.
\begin{enumerate}
\item[(i)] For every pair of hypotheses $\theta_p,\theta_q\in\Theta$, the corresponding source set $\mathcal{S}(\theta_p,\theta_q)$ is non-empty.
\item[(ii)] The communication graph $\mathcal{G}$ is strongly-connected.
\item[(iii)] Every agent $i\in\mathcal{V}$ has a non-zero prior belief on each hypothesis, i.e., $\pi_{i,0}(\theta) > 0,\mu_{i,0}(\theta) > 0$ for all $i\in\mathcal{V}$, and for all $\theta\in\Theta$.
\end{enumerate}
Then, the time-triggered distributed learning rule described by equations \eqref{eqn:Bayes}, \eqref{eqn:update1}, \eqref{eqn:update2} provides the following guarantees.
\begin{itemize}
    \item \textbf{(Consistency)}: For each agent  $i\in\mathcal{V}$,  $\mu_{i,t}(\theta^{\star}) \rightarrow 1$ a.s.
    \item \textbf{(Asymptotic Rate of Rejection of False Hypotheses)}: For each agent  $i\in\mathcal{V}$, and for each false hypothesis $\theta\in\Theta\setminus\{\theta^{\star}\},$ the following holds:
    \begin{equation}
        \liminf_{t\to\infty}-\frac{\log\mu_{i,t}(\theta)}{t} \geq \max_{v\in\mathcal{S}(\theta^{\star},\theta)}\frac{K_v(\theta^{\star},\theta)}{a^{(d(v,i)+1)}} \hspace{1mm} a.s.
        \label{eqn:asymprate}
    \end{equation}
\end{itemize}
\label{thm:main}
\end{theorem}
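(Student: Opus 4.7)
The plan has two conceptual ingredients: a strong-law-based rate bound on the Bayesian local beliefs at source agents, and a propagation argument that pushes these bounds through the network while accounting for the min-protocol's normalizer and the exponentially growing gaps between communications. Since each agent $v$'s observations are i.i.d.\ under $\mathbb{P}^{\theta^\star}$, applying the SLLN to the log-likelihood ratio associated with \eqref{eqn:Bayes} yields $\frac{1}{t}\log[\pi_{v,t}(\theta)/\pi_{v,t}(\theta^\star)] \to -K_v(\theta^\star,\theta)$ a.s.\ for each $v$ and $\theta$, and a standard computation on the Bayesian recursion shows $-\log\pi_{v,t}(\theta^\star)/t \to 0$ a.s. In particular, for any source $v \in \mathcal{S}(\theta^\star,\theta)$, we have $-\log\pi_{v,t}(\theta) = K_v(\theta^\star,\theta)\,t + o(t)$ a.s., which is the exponential rejection rate that must be delivered to every agent.

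The essential preparatory (bootstrap) step is to show that the actual beliefs on the truth decay at most sub-exponentially: $-\log\mu_{i,t}(\theta^\star)/t \to 0$ a.s.\ for every $i$. Since the normalizer in \eqref{eqn:update1} is bounded above by $m$ (each of the $m$ summands is at most $1$), we get $\mu_{i,t_k}(\theta^\star) \geq \min\{\{\mu_{j,t_{k-1}}(\theta^\star)\}_{j\in\mathcal{N}_i},\, \pi_{i,t_k}(\theta^\star)\}/m$. Taking $-\log$ and maximizing over $i$, the quantity $M_k := \max_i[-\log\mu_{i,t_k}(\theta^\star)]$ satisfies the max-plus recursion $M_k \leq \max\{M_{k-1},\, \max_i[-\log\pi_{i,t_k}(\theta^\star)]\} + \log m$. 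Iterating, the accumulated $k\log m$ penalty is $o(t_k)$ because $t_k \sim a^k/(a-1)$ grows geometrically; combined with the SLLN control on $-\log\pi_{i,t_k}(\theta^\star)$, this forces $M_k/t_k \to 0$, and the claim for general $t$ follows because $\mu_{i,t}$ is constant on each $[t_k,t_{k+1})$. Equipped with this, I apply $-\log$ to \eqref{eqn:update1} and lower-bound the normalizer by its $\theta^\star$-summand to obtain, at any communication time $t_k$,
\[
-\log\mu_{i,t_k}(\theta) \geq \max\{\max_{j\in\mathcal{N}_i}[-\log\mu_{j,t_{k-1}}(\theta)],\, -\log\pi_{i,t_k}(\theta)\} - o(t_k).
\]

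To conclude, fix $v \in \mathcal{S}(\theta^\star,\theta)$, set $d = d(v,i)$, and take a shortest directed path $v=v_0\to v_1\to\cdots\to v_d=i$. Starting from $-\log\mu_{v,t_k}(\theta) \geq K_v(\theta^\star,\theta)\,t_k - o(t_k)$ (obtained by specializing the above inequality to $v$ and dropping the neighbor term in favor of the local Bayesian term), an induction along the path, each edge advancing one communication round, yields $-\log\mu_{i,t_{k+d}}(\theta) \geq K_v(\theta^\star,\theta)\,t_k - o(t_{k+d})$. Since $\mu_{i,t}$ is constant on $[t_{k+d},t_{k+d+1})$, the worst-case slope $-\log\mu_{i,t}(\theta)/t$ is attained as $t \to t_{k+d+1}^-$; using $t_k/t_{k+d+1} = (a^k-1)/(a^{k+d+1}-1) \to a^{-(d+1)}$, this gives $\liminf_{t\to\infty}-\log\mu_{i,t}(\theta)/t \geq K_v(\theta^\star,\theta)/a^{d+1}$. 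Maximizing over $v$ proves \eqref{eqn:asymprate}, and consistency ($\mu_{i,t}(\theta^\star) \to 1$) follows because every false $\theta$ has $\mu_{i,t}(\theta)$ decaying exponentially while the beliefs sum to $1$. I expect the bootstrap step to be the main obstacle: without an a priori sub-exponential lower bound on $\mu_{i,t}(\theta^\star)$, the log-normalizer cannot be absorbed into $o(t_k)$ and no rate can be extracted; that the bootstrap succeeds is a delicate consequence of the sparse-communication schedule, because the trivial per-round loss $\log m$ becomes negligible relative to $t_k$ only thanks to the geometric growth of the communication times.
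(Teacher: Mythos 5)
Your proposal is correct and follows essentially the same route as the paper: Bayesian decay rates at the source agents, a bootstrap controlling the $\theta^{\star}$-summand of the normalizer in \eqref{eqn:update1}, an induction along shortest paths that advances one hop per communication round, and the $a^{-(d(v,i)+1)}$ attenuation extracted from the fact that beliefs are frozen over geometrically growing intervals, evaluated at the right endpoint $t_{k+d+1}^{-}$. One correction to your closing remark: the bootstrap does not in fact depend on the sparse schedule --- each summand of the normalizer in \eqref{eqn:update1} is at most $\pi_{i,t+1}(\theta_p)$, so the normalizer is at most $1$ (not merely $m$), there is no per-round $\log m$ loss to absorb, and $\mu_{i,t}(\theta^{\star})$ is actually bounded away from zero uniformly in $t$ (this is the paper's Lemma \ref{lemma:bound}, and it is also what lets the same argument cover the $a=1$ case of Corollary \ref{thm:corollary}, where your $k\log m$ accounting would no longer be $o(t_k)$).
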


We obtain the following important corollary, the proof of which follows readily from that of Theorem \ref{thm:main} in Section \ref{sec:proofs}.
\begin{corollary}
Suppose communication occurs at every time-step, i.e., suppose $a=1$. Let conditions (i)-(iii) in the statement of Theorem \ref{thm:main} hold. Then, our proposed learning rule guarantees consistency in the same sense as in Theorem \ref{thm:main}. Furthermore, for each agent $i\in\mathcal{V}$, and for each false hypothesis $\theta\in\Theta\setminus\{\theta^{\star}\}$, the following holds:
    \begin{equation}
        \liminf_{t\to\infty}-\frac{\log\mu_{i,t}(\theta)}{t} \geq \max_{v\in\mathcal{S}(\theta^{\star},\theta)}K_v(\theta^{\star},\theta) \hspace{1mm} a.s.
        \label{eqn:asymprate2}
    \end{equation}
\label{thm:corollary}
\end{corollary}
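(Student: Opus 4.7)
The plan is to derive Corollary \ref{thm:corollary} directly from the machinery developed for Theorem \ref{thm:main}, specialized to $a=1$. The immediate observation is that the attenuation factor $a^{d(v,i)+1}$ in \eqref{eqn:asymprate} collapses to $1$, so the asymptotic lower bound becomes $\max_{v\in\mathcal{S}(\theta^{\star},\theta)}K_v(\theta^{\star},\theta)$, independent of the identity and graph position of agent $i$. Consistency then follows from this rate bound together with the normalization $\sum_p\mu_{i,t}(\theta_p)=1$: each false hypothesis receives vanishing belief, forcing $\mu_{i,t}(\theta^{\star})\to 1$ almost surely.

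For a self-contained argument, I would first fix an arbitrary false hypothesis $\theta$ and any source agent $v\in\mathcal{S}(\theta^{\star},\theta)$. Because the observations at $v$ are i.i.d.\ under $\mathbb{P}^{\theta^{\star}}$, the strong law of large numbers applied to the log-likelihood ratios $\log\bigl(l_v(s_{v,\tau}|\theta^{\star})/l_v(s_{v,\tau}|\theta)\bigr)$ yields
\begin{equation*}
\frac{1}{t}\log\frac{\pi_{v,t}(\theta^{\star})}{\pi_{v,t}(\theta)}\;\xrightarrow[t\to\infty]{}\;K_v(\theta^{\star},\theta)\quad\text{a.s.},
\end{equation*}
which establishes the desired decay rate of the \emph{local} belief at the source agent.

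The next step is to propagate this rate to an arbitrary agent $i$ by induction on the shortest-path distance $d(v,i)$. Using \eqref{eqn:update1} and lower-bounding the normalizer by its $\theta^{\star}$-component, every agent $j\in\mathcal{V}$ satisfies
\begin{equation*}
\mu_{j,t+1}(\theta)\;\leq\;\frac{\min\bigl\{\{\mu_{k,t}(\theta)\}_{k\in\mathcal{N}_j},\,\pi_{j,t+1}(\theta)\bigr\}}{\min\bigl\{\{\mu_{k,t}(\theta^{\star})\}_{k\in\mathcal{N}_j},\,\pi_{j,t+1}(\theta^{\star})\bigr\}}.
\end{equation*}
Since $a=1$ forces every time-step to lie in $\mathbb{I}$, iterating this bound along a shortest path $v=v_0,v_1,\ldots,v_{d(v,i)}=i$ shows that $-\log\mu_{i,t}(\theta)$ inherits, up to a finite additive time offset and a cumulative logarithmic term from the successive denominators, the decay rate of $\log(1/\pi_{v,t}(\theta))$ obtained in the previous paragraph. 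Dividing by $t$ and taking $\liminf$ eliminates both correction terms, and maximizing over $v\in\mathcal{S}(\theta^{\star},\theta)$ produces \eqref{eqn:asymprate2}.

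The main obstacle is controlling the denominator $\min\{\{\mu_{k,t}(\theta^{\star})\}_{k\in\mathcal{N}_j},\pi_{j,t+1}(\theta^{\star})\}$ throughout the propagation. Although $\pi_{j,t}(\theta^{\star})$ need not tend to $1$ when agent $j$ cannot distinguish $\theta^{\star}$ from every other hypothesis, it is precisely the point of the min-protocol (and of the ancillary lemmas underlying the proof of Theorem \ref{thm:main}) that both $\pi_{j,t}(\theta^{\star})$ and $\mu_{j,t}(\theta^{\star})$ admit a strictly positive almost-sure liminf. Consequently the logarithm of the denominator contributes an a.s.\ $o(t)$ perturbation that vanishes after division by $t$, leaving the claimed network-independent rate intact.
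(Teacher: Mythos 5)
Your proposal is correct and follows essentially the same route as the paper: the local-belief decay rate at a source agent via the strong law applied to log-likelihood ratios (the paper's Lemma \ref{lemma:Bayes}), a uniform almost-sure positive lower bound on $\pi_{j,t}(\theta^{\star})$ and $\mu_{j,t}(\theta^{\star})$ to control the normalizer (the paper's Lemma \ref{lemma:bound}), and induction on the shortest-path distance $d(v,i)$ through the min-update (the paper's Lemma \ref{lemma:main}), specialized to $a=1$ where every time-step is a trigger step so the geometric-interval bookkeeping and the attenuation factor $a^{d(v,i)+1}$ disappear. You also correctly identify that the per-hop $\log(1/\eta(\omega))$ corrections are bounded and vanish after dividing by $t$, which is exactly why the resulting rate is network-structure independent.
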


We remark on the implications of the above results.

 \textbf{Implications of Theorem \ref{thm:main}}: We first note that despite its simplicity, the time-triggered algorithm proposed in Section \ref{sec:Algo} provides strong guarantees: Eqn. \eqref{eqn:asymprate} indicates that although the inter-communication intervals grow exponentially at an arbitrarily large (but finite) rate $a$, each agent is still able to  eliminate every false hypothesis at an exponential rate with probability $1$.  More interestingly, \eqref{eqn:asymprate} reveals that the asymptotic learning rates are \textit{agent-specific}, i.e., different agents may discover the truth at different rates.\footnote{We use the lower bounds derived in \eqref{eqn:asymprate}, \eqref{eqn:asymprate2} as a proxy when referring to the corresponding asymptotic learning rates.} In particular, when considering the asymptotic rate of rejection of a particular false hypothesis at a given agent $i$, notice from the RHS of \eqref{eqn:asymprate} that one needs to account for the attenuated relative entropies of the corresponding source agents, where the attenuation factor scales exponentially with the distances of agent $i$ from such source agents. This contrasts with existing literature \cite{jad1,jad2,shahin,lalitha,nedic}, and the case when $a=1$ in Corollary \ref{thm:corollary}, where all agents learn the truth at identical rates. 

 \textbf{Implications of Corollary \ref{thm:corollary}}: In sharp contrast to the case when $a>1$, Corollary \ref{thm:corollary} indicates that when communication occurs at every time-step (i.e., $a=1$), the asymptotic learning rates are \textit{network-structure independent}, and \textit{identical} for each agent. Since this case represents the standard distributed hypothesis testing setup studied in literature, it becomes important to know how such rates compare with those resulting from existing  ``belief-averaging" schemes \cite{jad1,jad2,shahin,lalitha,nedic}. To this end, we note that under the same set of assumptions as in Theorem \ref{thm:main}, both linear \cite{jad1,jad2} and log-linear \cite{shahin,lalitha,nedic} opinion pooling lead to an asymptotic rate of rejection of the form $\sum_{i\in\mathcal{V}}\nu_iK_i(\theta^{\star},\theta)$ for each false hypothesis $\theta\in\Theta\setminus\{\theta^{\star}\}$, and the rate is identical for each agent. Here, $\nu_i$ represents the eigenvector centrality of agent $i\in\mathcal{V}$. It is well known that for a strongly-connected graph, $\nu_i>0, \forall i\in \mathcal{V}$. Thus, based on the above discussion, and referring to \eqref{eqn:asymprate2}, we conclude that a significant contribution of the algorithm proposed in this paper is that it yields \textit{strictly better} asymptotic learning rates than those existing in the literature, for the standard setting when $a=1$.\footnote{Recently, in \cite{mitraTAC19}, we showed that this result continues to hold even if the underlying graph changes with time, but satisfies a mild joint-strong connectivity condition.}

 \textbf{Trade-Off between Sparse Communication and Quality of Learning}: From \eqref{eqn:asymprate}, it is apparent that sparser communication schedules (corresponding to larger $a$'s) come at the cost of lower asymptotic learning rates. Furthermore, since such rates depend upon the network-structure when $a>1$, a poor allocation of signal structures to agents can have adverse effects on the learning rates of certain agents. However, the above problem is readily bypassed when $a=1$, since the learning rates for that case solely depend on the relative entropies of the agents, as shown by \eqref{eqn:asymprate2}. 

\section{Proof of the Main Result}
\label{sec:proofs}
In order to prove Theorem \ref{thm:main}, we require a few intermediate results. The first one is a standard consequence of Bayesian updating,  and characterizes the behavior of the local belief trajectories generated via \eqref{eqn:Bayes}; for a proof, see \cite{mitraACC19}. 
\begin{lemma}
Consider a false hypothesis $\theta\in\Theta\setminus\{\theta^{\star}\}$, and an
agent $i\in\mathcal{S}(\theta^{\star},\theta)$. Suppose $\pi_{i,0}(\theta_p) > 0, \forall \theta_p\in\Theta$. Then, the update rule \eqref{eqn:Bayes} ensures that (i) $\pi_{i,t}(\theta) \rightarrow 0$ a.s., (ii) $\pi_{i,\infty}(\theta^{\star})\triangleq\lim_{t\to\infty}\pi_{i,t}(\theta^{\star})$ exists a.s. and satisfies $\pi_{i,\infty}(\theta^{\star})\geq \pi_{i,0}(\theta^{\star})$, and (iii) the following holds:
\begin{equation}
\lim_{t\to\infty}\frac{1}{t}\log\frac{\pi_{i,t}(\theta)}{\pi_{i,t}(\theta^{\star})}=-K_i(\theta^{\star},\theta) \hspace{1mm} a.s.
\label{eqn:localrate}
\end{equation}
\label{lemma:Bayes}
\end{lemma}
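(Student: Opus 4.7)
The plan is to prove part (iii) first, from which (i) follows immediately, and then leverage the same log-likelihood ratio argument to extract (ii). The key observation is that the normalization denominator in the Bayesian update \eqref{eqn:Bayes} cancels when one takes ratios of posteriors, yielding the additive recursion
\[
\log\frac{\pi_{i,t+1}(\theta)}{\pi_{i,t+1}(\theta^{\star})} \;=\; \log\frac{l_i(s_{i,t+1}|\theta)}{l_i(s_{i,t+1}|\theta^{\star})} \;+\; \log\frac{\pi_{i,t}(\theta)}{\pi_{i,t}(\theta^{\star})}.
\]
Unrolling this recursion and dividing by $t$ expresses $\tfrac{1}{t}\log[\pi_{i,t}(\theta)/\pi_{i,t}(\theta^{\star})]$ as a time average of i.i.d.\ log-likelihood ratios plus an $O(1/t)$ initial-condition term. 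Because $\mathcal{S}_i$ is finite and every $l_i(w_i|\theta)$ is strictly positive by assumption, these log-likelihood ratios are bounded, so Kolmogorov's strong law applies under $\mathbb{P}^{\theta^{\star}}$ and delivers the a.s.\ limit $\mathbb{E}^{\theta^{\star}}[\log(l_i(s_i|\theta)/l_i(s_i|\theta^{\star}))]=-K_i(\theta^{\star},\theta)$, which is exactly \eqref{eqn:localrate}.

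For (i), since $i\in\mathcal{S}(\theta^{\star},\theta)$ implies $K_i(\theta^{\star},\theta)>0$, the ratio $\pi_{i,t}(\theta)/\pi_{i,t}(\theta^{\star})$ decays exponentially a.s., and combined with $\pi_{i,t}(\theta^{\star})\le 1$ this yields $\pi_{i,t}(\theta)\to 0$ a.s. For (ii), I would run the same log-ratio argument simultaneously over every $\theta_p\in\Theta\setminus\{\theta^{\star}\}$, which is legitimate since $\Theta$ is finite and a finite intersection of a.s.\ events remains a.s. The alternatives then partition into two groups: for $\theta_p\notin\Theta_i^{\theta^{\star}}$ one has $K_i(\theta^{\star},\theta_p)>0$ and $\pi_{i,t}(\theta_p)/\pi_{i,t}(\theta^{\star})\to 0$ by the argument just given; for $\theta_p\in\Theta_i^{\theta^{\star}}$ the log-likelihood ratio vanishes pathwise, so the recursion pins $\pi_{i,t}(\theta_p)/\pi_{i,t}(\theta^{\star})$ at its deterministic initial value. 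Dividing the normalization identity $\sum_{p=1}^{m}\pi_{i,t}(\theta_p)=1$ through by $\pi_{i,t}(\theta^{\star})$ and passing to the a.s.\ limit then forces
\[
\pi_{i,\infty}(\theta^{\star}) \;=\; \frac{\pi_{i,0}(\theta^{\star})}{\sum_{\theta_p\in\Theta_i^{\theta^{\star}}}\pi_{i,0}(\theta_p)} \;\geq\; \pi_{i,0}(\theta^{\star}),
\]
where the inequality follows because the denominator is a partial sum of a probability distribution and is therefore at most $1$.

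Since this lemma is a standard consequence of Bayesian updating already established in \cite{mitraACC19}, I do not anticipate a substantive technical obstacle. The only subtlety worth flagging is that the a.s.\ lower bound in (ii) should \emph{not} be obtained by appealing to (sub)martingale convergence applied to $\pi_{i,t}(\theta^{\star})$ directly, since that would only yield a bound in expectation; it must instead come from the pathwise identification of the limit above, which in turn relies on carefully accounting for every observationally-equivalent hypothesis rather than only the single $\theta$ fixed in the statement.
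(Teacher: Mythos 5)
Your proof is correct and follows the standard argument: the paper itself does not prove this lemma (it defers to \cite{mitraACC19}), and your route---cancelling the normalizer to get the additive log-likelihood-ratio recursion, applying the strong law to the bounded i.i.d.\ increments for (iii) and hence (i), and identifying $\pi_{i,\infty}(\theta^{\star})=\pi_{i,0}(\theta^{\star})/\sum_{\theta_p\in\Theta_i^{\theta^{\star}}}\pi_{i,0}(\theta_p)$ via the normalization identity for (ii)---is exactly the standard derivation used in that reference. No gaps.
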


\begin{lemma}
 Suppose the conditions in Theorem \ref{thm:main} hold, and the learning rule given by \eqref{eqn:Bayes}, \eqref{eqn:update1}, and \eqref{eqn:update2} is employed by each agent. Then, there exists a set $\bar{\Omega}\subseteq\Omega$ with the following properties: (i) $\mathbb{P}^{\theta^{\star}}(\bar{\Omega})=1$, and (ii) for each $\omega\in\bar{\Omega}$, there exist constants $\eta(\omega)\in(0,1)$ and $t'(\omega)\in(0,\infty)$ such that
\begin{equation}
\pi_{i,t}(\theta^{\star}) \geq \eta(\omega), \mu_{i,t}(\theta^{\star}) \geq \eta(\omega), \forall t \geq t'(\omega),\forall i\in\mathcal{V}.
\label{eqn:lowerbound}
    \end{equation}
\label{lemma:bound}
\end{lemma}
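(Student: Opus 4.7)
The plan is to decompose the argument into two parts: (a) establishing an eventual a.s.\ positive lower bound on $\pi_{i,t}(\theta^{\star})$ that holds uniformly over all agents, and (b) transferring this bound to $\mu_{i,t}(\theta^{\star})$ by exploiting a monotonicity property of the min-based update \eqref{eqn:update1}. The main obstacle is (a), since Lemma~\ref{lemma:Bayes} as stated only covers source agents, whereas I need a guarantee for every $i\in\mathcal{V}$ (including agents that are not source agents for some of the false hypotheses).

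For (a), I would fix an agent $i$ and partition $\Theta\setminus\{\theta^{\star}\}$ into (i) hypotheses $\theta\in\Theta_i^{\theta^{\star}}\setminus\{\theta^{\star}\}$, for which the Bayesian update \eqref{eqn:Bayes} leaves the ratio $\pi_{i,t}(\theta)/\pi_{i,t}(\theta^{\star})$ identically equal to $\pi_{i,0}(\theta)/\pi_{i,0}(\theta^{\star})$ because the corresponding per-step log-likelihood ratios vanish, and (ii) hypotheses $\theta$ with $i\in\mathcal{S}(\theta^{\star},\theta)$, for which Lemma~\ref{lemma:Bayes} gives $\pi_{i,t}(\theta)/\pi_{i,t}(\theta^{\star})\to 0$ a.s. Writing
\[
\frac{1}{\pi_{i,t}(\theta^{\star})}=\sum_{\theta\in\Theta}\frac{\pi_{i,t}(\theta)}{\pi_{i,t}(\theta^{\star})},
\]
the right-hand side converges a.s.\ to the finite, strictly positive limit $\sum_{\theta\in\Theta_i^{\theta^{\star}}}\pi_{i,0}(\theta)/\pi_{i,0}(\theta^{\star})$, so $\pi_{i,t}(\theta^{\star})$ converges a.s.\ to some $\pi_{i,\infty}(\theta^{\star})>0$. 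Intersecting the finitely many full-measure events across $i\in\mathcal{V}$ yields $\bar{\Omega}$ with $\mathbb{P}^{\theta^{\star}}(\bar{\Omega})=1$, on which I can pick a common $\tilde{\eta}(\omega)>0$ and $T_1(\omega)<\infty$ such that $\pi_{i,t}(\theta^{\star})\geq\tilde{\eta}(\omega)$ for every $i\in\mathcal{V}$ and every $t\geq T_1(\omega)$.

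For (b), the crucial observation is that each of the $m$ summands in the denominator of \eqref{eqn:update1} is upper bounded by the corresponding $\pi_{i,t+1}(\theta_p)$, so that denominator is at most $1$. This immediately yields the no-collapse inequality
\[
\mu_{i,t+1}(\theta^{\star})\ge\min\Bigl\{\min_{j\in\mathcal{N}_i}\mu_{j,t}(\theta^{\star}),\,\pi_{i,t+1}(\theta^{\star})\Bigr\} \quad \text{for } t+1\in\mathbb{I},
\]
while \eqref{eqn:update2} freezes $\mu_{i,t}(\theta^{\star})$ between consecutive communication times. Both \eqref{eqn:Bayes} and \eqref{eqn:update1} preserve strict positivity starting from strictly positive priors, so $\mu_{i,t}(\theta^{\star})>0$ for every $i$ and every $t$.

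Letting $t_{k_0}$ denote the first element of $\mathbb{I}$ with $t_{k_0}\geq T_1(\omega)$, I would set
\[
\eta(\omega)=\min\Bigl\{\tilde{\eta}(\omega),\,\min_{i\in\mathcal{V},\,0\leq s<t_{k_0}(\omega)}\mu_{i,s}(\theta^{\star})\Bigr\}>0
\]
and prove by induction over communication epochs $k\geq k_0$ that $\min_{i\in\mathcal{V}}\mu_{i,t_k}(\theta^{\star})\geq\eta(\omega)$, combining the no-collapse bound, the fact that $\pi_{i,t_k}(\theta^{\star})\geq\tilde{\eta}(\omega)\geq\eta(\omega)$, and the inductive hypothesis. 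The freezing in between then extends the bound to every $t\geq t_{k_0}(\omega)$, and setting $t'(\omega)=t_{k_0}(\omega)$ completes the argument. The only delicate point is step (a); once the no-collapse inequality is extracted from the structure of \eqref{eqn:update1}, step (b) is essentially bookkeeping.
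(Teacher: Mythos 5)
Your proposal is correct and follows essentially the same route as the paper: first establish an eventual uniform positive lower bound on the local beliefs $\pi_{i,t}(\theta^{\star})$ (the paper does this by citing Lemma~\ref{lemma:Bayes}(ii) for agents that are sources for some hypothesis and by noting constancy of beliefs otherwise, while you re-derive it via the likelihood-ratio decomposition of $1/\pi_{i,t}(\theta^{\star})$ — the same underlying fact), and then propagate the bound to the actual beliefs using exactly the paper's key observation that the denominator of \eqref{eqn:update1} is at most $\sum_p \pi_{i,t+1}(\theta_p)=1$, followed by induction over communication epochs with \eqref{eqn:update2} freezing the beliefs in between. No gaps.
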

\begin{proof} Let $\bar{\Omega}\subseteq\Omega$ denote the set of sample paths for which the assertions in Lemma \ref{lemma:Bayes} hold for each false hypothesis $\theta\in\Theta\setminus\{\theta^{\star}\}$. Based on Lemma \ref{lemma:Bayes}, we note that $\mathbb{P}^{\theta^{\star}}(\bar{\Omega})=1$. Consequently, to prove the result, it suffices to establish the existence of $\eta(\omega) \in (0,1)$, and $t'(\omega)\in (0,\infty)$ such that \eqref{eqn:lowerbound} holds for each sample path $\omega \in \bar{\Omega}$. To this end, pick an arbitrary sample path $\omega\in\bar{\Omega}.$ We first argue that the local beliefs of every agent on the true state $\theta^{\star}$ are bounded away from $0$ on $\omega$. To see this, pick any agent $i\in\mathcal{V}$. Suppose there exists some $\theta\in\Theta\setminus\{\theta^{\star}\}$ for which $i\in\mathcal{S}(\theta^{\star},\theta)$. Then, based on our choice of $\omega$, it follows directly from Lemma \ref{lemma:Bayes} that $\pi_{i,\infty}(\theta^{\star})\geq\pi_{i,0}(\theta^{\star})>0$, where the last inequality follows from condition (iii) in Theorem \ref{thm:main}. In particular, given the structure of the update rule \eqref{eqn:Bayes}, it follows that $\pi_{i,t}(\theta^{\star}) > 0$ for all time (since if $\pi_{i,t}(\theta^{\star})=0$ at any instant, then the corresponding belief would remain at $0$ for all subsequent time-steps, thereby violating the fact that $\pi_{i,\infty}(\theta^{\star})\geq\pi_{i,0}(\theta^{\star})>0$). If there exists no  $\theta\in\Theta\setminus\{\theta^{\star}\}$ for which $i\in\mathcal{S}(\theta^{\star},\theta)$, then every hypothesis in $\Theta$ is observationally equivalent to $\theta^{\star}$ from the point of view of agent $i$. In this case, it is easy to see that based on \eqref{eqn:Bayes}, $\boldsymbol{\pi}_{i,t}=\boldsymbol{\pi}_{i,0}, \forall t\in\mathbb{N}_{+}$. In particular, this implies $\pi_{i,t}(\theta^{\star})=\pi_{i,0}(\theta^{\star})>0, \forall t \in \mathbb{N}_{+}$. This establishes our claim that on $\omega$, the local beliefs of all the agents remain bounded away from $0$. 

 To proceed, define $\gamma_1\triangleq\min_{i\in\mathcal{V}} \pi_{i,0}(\theta^{\star})>0$, where the inequality follows from condition (iii) in Theorem \ref{thm:main}. Pick a small number $\delta > 0$ such that $\delta < \gamma_1$, and notice that our discussion concerning the evolution of the local beliefs readily implies the existence of a time-step $t'(\omega)$, such that for all $t \geq t'(\omega)$, $\pi_{i,t}(\theta^{\star}) \geq  \gamma_1-\delta > 0, \forall i\in\mathcal{V}$. Now define $\gamma_2(\omega)\triangleq\min_{i\in\mathcal{V}}\{\mu_{i,t'(\omega)}(\theta^{\star})\}$, and observe that $\gamma_2(\omega) > 0$. This observation follows from the fact that given the structure of the update rules \eqref{eqn:update1} and \eqref{eqn:update2}, and condition (iii) in Theorem \ref{thm:main},  $\gamma_2(\omega)$ can equal $0$
if and only if some agent in the network sets its local belief on $\theta^{\star}$ to $0$ at some time-step prior to $t'(\omega)$. However, this possibility is ruled out in view of the previously established fact that on $\omega$, $\pi_{i,t}(\theta^{\star})>0, \forall t\in\mathbb{N}, \forall i\in\mathcal{V}$.
Let $\eta(\omega)=\min\{\gamma_1-\delta,\gamma_2(\omega)\} > 0$. It is apparent from the preceding discussion that $\pi_{i,t}(\theta^{\star})\geq\eta(\omega),\forall t\geq t'(\omega),\forall i\in\mathcal{V}$. It remains to establish a similar result for the actual beliefs $\mu_{i,t}(\theta^{\star})$.
To this end, let $\bar{t}(\omega) > t'(\omega)$ be the first time-step following $t'(\omega)$ that belongs to the set $\mathbb{I}$. Based on \eqref{eqn:update2}, notice that $\mu_{i,t}(\theta^{\star}) \geq \eta(\omega)$ for all $t\in[t'(\omega),\bar{t}(\omega))$, and for each $i\in\mathcal{V}$. Based on \eqref{eqn:update1}, at time-step $\bar{t}(\omega)\in\mathbb{I}$, $\mu_{i,\bar{t}(\omega)}(\theta^{\star})$ for an agent $i\in\mathcal{V}$ satisfies:
\begin{equation}
\resizebox{0.85\hsize}{!}{$
\begin{aligned}
\mu_{i,\bar{t}(\omega)}(\theta^{\star})&\geq\frac{\eta(\omega)}{\sum\limits_{p=1}^{m}\min\{\{\mu_{j,\bar{t}(\omega)-1}(\theta_p)\}_{{j\in\mathcal{N}_i}},\pi_{i,\bar{t}(\omega)}(\theta_p)\}}\\
&\geq\frac{\eta(\omega)}{\sum\limits_{p=1}^{m}\pi_{i,\bar{t}(\omega)}(\theta_p)}=\eta(\omega),
\end{aligned}
$}
\end{equation}
where the last equality follows from the fact that the local belief vectors generated via \eqref{eqn:Bayes} are valid probability distributions over the hypothesis set $\Theta$ at each time-step, and hence $\sum\limits_{p=1}^{m}\pi_{i,\bar{t}(\omega)}(\theta_p)=1$. The above argument applies identically to each agent in $\mathcal{V}$. Furthermore, it is easily seen that based on \eqref{eqn:update2}, and a similar reasoning as above, identical conclusions can be drawn for each time-step $t>t'(\omega),t\in\mathbb{I}$ when the agents update their actual beliefs based on \eqref{eqn:update1}. This readily establishes \eqref{eqn:lowerbound}, and completes the proof. 
\end{proof}
\begin{lemma}
Consider a false hypothesis $\theta\in\Theta\setminus\{\theta^{\star}\}$ and an agent $v\in\mathcal{S}(\theta^{\star},\theta)$. Suppose the conditions stated in Theorem \ref{thm:main} hold. Then, the learning rule described by equations \eqref{eqn:Bayes}, \eqref{eqn:update1} and \eqref{eqn:update2} guarantee the following for each agent $i\in\mathcal{V}$:
\begin{equation}
        \liminf_{t\to\infty}-\frac{\log\mu_{i,t}(\theta)}{t} \geq \frac{K_v(\theta^{\star},\theta)}{a^{(d(v,i)+1)}} \hspace{1mm} a.s.
        \label{eqn:eachsource}
    \end{equation}
\label{lemma:main}
\end{lemma}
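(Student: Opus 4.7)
The plan is to track how the information refuting $\theta$ propagates from the source agent $v$ to agent $i$ along a shortest directed path in $\mathcal{G}$, carefully accounting for the fact that actual beliefs are updated only at the sparse communication instants $\mathbb{I}=\{t_k\}$ with $t_{k+1}-t_k=a^k$. Throughout, I will restrict to a sample path $\omega\in\bar{\Omega}$ from Lemma \ref{lemma:bound}, so that $\pi_{j,t}(\theta^{\star}),\mu_{j,t}(\theta^{\star})\geq \eta(\omega)$ for all $j\in\mathcal{V}$ and all $t\geq t'(\omega)$, and also to the probability-one event on which the conclusion \eqref{eqn:localrate} of Lemma \ref{lemma:Bayes} holds for $v$.

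The first step will be a one-step propagation bound. For any $t_k\in\mathbb{I}$ with $t_k\geq t'(\omega)$, the denominator in \eqref{eqn:update1} is lower-bounded by its $\theta^{\star}$-summand, hence by $\eta(\omega)$. Upper-bounding the numerator by each of its arguments separately, I will obtain
\[
\mu_{i,t_k}(\theta)\leq \tfrac{1}{\eta(\omega)}\pi_{i,t_k}(\theta),\quad \mu_{i,t_k}(\theta)\leq \tfrac{1}{\eta(\omega)}\mu_{j,t_k-1}(\theta)
\]
for every $j\in\mathcal{N}_i$. Since no communication occurs in $(t_{k-1},t_k)$, rule \eqref{eqn:update2} yields $\mu_{j,t_k-1}(\theta)=\mu_{j,t_{k-1}}(\theta)$, so the min-protocol advances information exactly one hop per communication round.

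Setting $L\triangleq d(v,i)$, I fix a shortest directed path $v=u_0\to u_1\to\cdots\to u_L=i$. For each sufficiently large $K$, I apply the first inequality at $u_0=v$ and time $t_{K-L}$, then iterate the second inequality $L$ times along the path, decreasing the communication index by one per hop. This telescopes to
\[
\mu_{i,t_K}(\theta)\leq \eta(\omega)^{-(L+1)}\,\pi_{v,t_{K-L}}(\theta).
\]
By Lemma \ref{lemma:Bayes} (together with $\pi_{v,t}(\theta^{\star})\leq 1$), for any $\epsilon>0$ and all sufficiently large $t$, $\pi_{v,t}(\theta)\leq e^{-t(K_v(\theta^{\star},\theta)-\epsilon)}$. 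Substituting yields a linear-in-$t_{K-L}$ lower bound on $-\log\mu_{i,t_K}(\theta)$.

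To conclude, I extend to arbitrary $t$ via \eqref{eqn:update2}: for $t\in[t_K,t_{K+1})$, $\mu_{i,t}(\theta)=\mu_{i,t_K}(\theta)$ while $t\leq t_{K+1}$. Since $t_k=(a^k-1)/(a-1)$, the ratio $t_{K-L}/t_{K+1}$ tends to $a^{-(L+1)}$ as $K\to\infty$. Taking the liminf and then letting $\epsilon\downarrow 0$ delivers the claimed bound $K_v(\theta^{\star},\theta)/a^{d(v,i)+1}$. The main technical subtlety lies in the bookkeeping of the second step: one must verify that the freezing of actual beliefs between successive communications, combined with the min-protocol, forces propagation to consume exactly one element of $\mathbb{I}$ per hop, so that the geometric spacing $t_{k+1}-t_k=a^k$ produces precisely the $a^{-(L+1)}$ attenuation factor. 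The remaining computations are routine.
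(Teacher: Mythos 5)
Your argument is correct, and it reaches the bound by a different decomposition than the paper's. The paper proceeds by induction on $d(v,i)$: the base case shows the \emph{source's actual belief} decays at rate $K_v(\theta^{\star},\theta)/a$ (via the piecewise-constant trajectory analysis with $f(p)=(a^p-1)/(a-1)$ and $p(t)=\floor{g(t)}$ in \eqref{eqn:uppbound3}--\eqref{eqn:constants}), and each inductive step feeds the resulting uniform exponential bound $\mu_{l,t}(\theta)<e^{-(H_l-\epsilon)t}$ into the neighbor's update, re-running the same asymptotic machinery to pick up one further factor of $1/a$ per hop. You instead unroll the whole shortest path at once: the chain $\mu_{i,t_K}(\theta)\leq \eta(\omega)^{-(L+1)}\pi_{v,t_{K-L}}(\theta)$ uses exactly the same two structural estimates as the paper (numerator of \eqref{eqn:update1} bounded by a single argument of the min, denominator bounded below by $\eta(\omega)$ via Lemma \ref{lemma:bound}, constancy under \eqref{eqn:update2}), but then the entire attenuation factor $a^{-(L+1)}$ drops out of the single limit $t_{K-L}/t_{K+1}\to a^{-(L+1)}$ using $t_k=(a^k-1)/(a-1)$, rather than being accumulated one factor of $a$ per induction level. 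Your version is shorter and makes the origin of the $+1$ in the exponent transparent (it is the one extra round of staleness between the source's Bayesian update at $t_{K-L}$ and the evaluation time $t<t_{K+1}$); the paper's induction is more verbose but produces, at every intermediate node $l$, a standalone uniform-in-$t$ decay estimate of the form \eqref{eqn:boundl1}, which is what it then reuses verbatim at the next level. Two small points of bookkeeping in your write-up, neither of which affects correctness: the $\eta(\omega)$ lower bound on the denominator at time $t_k$ requires $t_k-1\geq t'(\omega)$ (not merely $t_k\geq t'(\omega)$), which is absorbed by taking $K$ large; and the min-protocol advances information \emph{at least} one hop per communication round, which is all the upper bound needs.
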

\begin{proof}
Throughout this proof, we use the same notation as in the proof of Lemma \ref{lemma:bound}. With $\bar{\Omega}$ as in Lemma \ref{lemma:bound}, pick an arbitrary sample path $\omega\in\bar{\Omega}$, an agent $v\in\mathcal{S}(\theta^{\star},\theta)$, and an agent $i\in\mathcal{V}$. Since condition (ii) in Theorem \ref{thm:main} is met, there exists a directed path of shortest length from agent $v$ to agent $i$ in $\mathcal{G}$. To prove the result, we shall induct on the length of such a path. First, we consider the base case when $d(v,i)=0$, i.e., when $i=v$. In other words, we will analyze the asymptotic rate of rejection of $\theta$ at the source agent $v$. Fix $\epsilon > 0$, and notice that since $v\in\mathcal{S}(\theta^{\star},\theta)$, Lemma \ref{lemma:Bayes} implies that there exists ${t}_v(\omega,\theta,\epsilon)\in\mathbb{N}_{+}$, such that:
\begin{equation}
    \pi_{v,t}(\theta) <  e^{-(K_v(\theta^{\star},\theta)-\epsilon)t}, \forall t \geq {t}_v(\omega,\theta,\epsilon).
    \label{eqn:bound1}
\end{equation}
Since $\omega\in\bar{\Omega}$,  Lemma \ref{lemma:bound} guarantees the existence of a time-step $t'(\omega) < \infty$, and a constant $\eta(\omega)>0$, such that on $\omega$, $\pi_{i,t}(\theta^{\star}) \geq \eta(\omega), \mu_{i,t}(\theta^{\star}) \geq \eta(\omega), \forall t\geq t'(\omega), \forall i\in\mathcal{V}$. Let $\bar{t}_v(\omega,\theta,\epsilon)=\max\{t'(\omega),t_v(\omega,\theta,\epsilon)\}$. For the remainder of the proof, to simplify the notation,  we suppress the dependence of various quantities on the parameters $\omega,\theta$, and $\epsilon$, since such dependence can be easily inferred from context. Let $\tilde{t} > \bar{t}_v$ be the first time-step following $\bar{t}_v$ that belongs to $\mathbb{I}$, i.e., a time-step when agent $v$ updates its actual beliefs based on \eqref{eqn:update1}. Then, based on the preceding discussion and \eqref{eqn:update1}, we have:
\begin{equation}
\resizebox{0.75\hsize}{!}{$
\begin{aligned}
\mu_{v,\tilde{t}}(\theta)&\overset{(a)}{\leq}\frac{\pi_{v,\tilde{t}}(\theta)}{\sum\limits_{p=1}^{m}\min\{\{\mu_{j,\tilde{t}-1}(\theta_p)\}_{{j\in\mathcal{N}_i}},\pi_{v,\tilde{t}}(\theta_p)\}}\\
&\overset{(b)}{<}\frac{e^{-(K_v(\theta^{\star},\theta)-\epsilon)\tilde{t}}}{\eta(\omega)}
=C(\omega)e^{-(K_v(\theta^{\star},\theta)-\epsilon)\tilde{t}},
\end{aligned}
$}
\label{eqn:upperbnd1}
\end{equation}
where $C(\omega)={\eta(\omega)}^{-1}$. Regarding the inequalities in \eqref{eqn:upperbnd1}, (a) follows directly from \eqref{eqn:update1}, whereas (b) follows from \eqref{eqn:bound1} and the fact that $\eta(\omega)$ lower bounds the beliefs (both local and actual) of all agents on the true state $\theta^{\star}$. Note that consecutive trigger-points $t_{k},t_{k+1}\in\mathbb{I}$ satisfy $t_{k+1}=at_{k}+1$. Based on \eqref{eqn:update2}, we then have:
\begin{equation}
    \mu_{v,t}(\theta) < C(\omega)e^{-(K_v(\theta^{\star},\theta)-\epsilon)\tilde{t}}, \forall t\in[\tilde{t},a\tilde{t}+1 ).
\end{equation}
Based on our rule, the next update of $\mu_{v,t}(\theta)$ takes place at time-step $a\tilde{t}+1$. Employing the same reasoning as we did to arrive at \eqref{eqn:upperbnd1}, we obtain:
\begin{equation}
\mu_{v,a\tilde{t}+1}(\theta)<C(\omega)e^{-(K_v(\theta^{\star},\theta)-\epsilon)(a\tilde{t}+1)}.
\label{eqn:upperbnd2}
\end{equation}
 Coupled with the above inequality, \eqref{eqn:update2} once again implies:
\begin{equation}
 \mu_{v,t}(\theta) < C(\omega)e^{-(K_v(\theta^{\star},\theta)-\epsilon)(a\tilde{t}+1)}, \forall t\in[a\tilde{t}+1,a^2\tilde{t}+a+1).
 \end{equation}
 Generalizing the above reasoning, we obtain:
 \begin{equation}
     \mu_{v,t}(\theta) < C(\omega)e^{-(K_v(\theta^{\star},\theta)-\epsilon)(a^p\tilde{t}+f(p))},
 \end{equation}
 $\forall t\in[a^p\tilde{t}+f(p),a^{(p+1)}\tilde{t}+af(p)+1)$, $p\in\mathbb{N}$, where
 \begin{equation}
     f(p)=\frac{(a^p-1)}{(a-1)}.
 \end{equation}
 This immediately leads to the conclusion that for any $t\geq\tilde{t}$:
 \begin{equation}
     \mu_{v,t}(\theta) < C(\omega)e^{-(K_v(\theta^{\star},\theta)-\epsilon)(a^{p(t)}\tilde{t}+f(p(t)))},
     \label{eqn:uppbound3}
 \end{equation}
 where 
 \begin{equation}
 p(t)=\floor{g(t)}, \hspace{2mm} g(t)=\frac{\log{\frac{(a-1)t+1}{(a-1)\tilde{t}+1}}}{\log a}.
     \label{eqn:constants}
 \end{equation}
 Taking the natural log on both sides of \eqref{eqn:uppbound3}, dividing throughout by $t$, and simplifying, we obtain that $\forall t\geq\tilde{t}$: 
\begin{equation}
\resizebox{1\hsize}{!}{$
\begin{split}
-\frac{\log\mu_{v,t}(\theta)}{t} &> \frac{(K_v(\theta^{\star},\theta)-\epsilon)(a^{p(t)}\tilde{t}+f(p(t)))}{t}-\frac{\log C(\omega)}{t}.
\end{split}
$}
\label{eqn:upperbnd4}
\end{equation}
Let $\alpha_{v}(\theta,\epsilon)=(K_v(\theta^{\star},\theta)-\epsilon)$. Then, taking the limit inferior on both sides of the above inequality yields:
\begin{equation}
\begin{split}
 \liminf_{t\to\infty}-\frac{\log\mu_{v,t}(\theta)}{t} &\geq \alpha_{v}(\theta,\epsilon)\lim_{t\to\infty}\frac{1}{t}\left[a^{p(t)}\tilde{t}+\frac{a^{p(t)}-1}{a-1}\right]\\
 &\geq \frac{\alpha_{v}(\theta,\epsilon)}{a}\lim_{t\to\infty}\frac{1}{t}\left[a^{g(t)}(\tilde{t}+\frac{1}{a-1})\right]\\
 &=\frac{\alpha_v(\theta,\epsilon)}{a},
 \end{split}
 \end{equation}
where the second inequality follows from the fact that $\floor{x}>x-1, \forall x\in\mathbb{R}$, and the final equality results from further simplifications based on \eqref{eqn:constants}.
Finally, note that $\epsilon$ can be made arbitrarily small in the above inequality, and that the above conclusions hold for a generic sample path $\omega \in \bar{\Omega}$, where $\mathbb{P}^{\theta^{\star}}(\bar{\Omega})=1$. This establishes \eqref{eqn:eachsource} for the case when $d(v,i)=0$, and completes the proof of the base case of our induction. To proceed, suppose \eqref{eqn:eachsource} holds for each node $i\in\mathcal{V}$ satisfying $0\leq d(v,i) \leq q$, where $q$ is a non-negative integer satisfying $q\leq \bar{d}(\mathcal{G})-1$ (recall that $\bar{d}(\mathcal{G})$  represents the diameter of the graph $\mathcal{G}$). Let $i\in\mathcal{V}$ be such that $d(v,i)=q+1.$ Thus, there must exist some node $l\in\mathcal{N}_i$ such that $d(v,l)=q.$ The induction hypothesis applies to this node $l$, and hence, we have:
\begin{equation}
        \liminf_{t\to\infty}-\frac{\log\mu_{l,t}(\theta)}{t} \geq \frac{K_v(\theta^{\star},\theta)}{a^{(q+1)}} \hspace{1mm} a.s.
    \end{equation}
Let $\tilde{\Omega}\subseteq\Omega$ be the set of sample paths for which the above inequality holds. With $\bar{\Omega}$ defined as before, notice that $\mathbb{P}^{\theta^{\star}}(\tilde{\Omega}\cap\bar{\Omega})=1$, since $\tilde{\Omega}$ and $\bar{\Omega}$ each have $\mathbb{P}^{\theta^{\star}}$-measure $1$. Pick an arbitrary sample path $\omega\in\tilde{\Omega}\cap\bar{\Omega}$, and notice that based on arguments identical to the base case, on the sample path $\omega$ there exists a time-step $\bar{t}_l$, such that the beliefs of all agents on $\theta^{\star}$ are bounded below by $\eta(\omega)$ following $\bar{t}_l$, and
\begin{equation}
    \mu_{l,t}(\theta) <  e^{-({H}_l(\theta^{\star},\theta)-\epsilon)t}, \forall t \geq \bar{t}_l,
    \label{eqn:boundl1}
\end{equation}
where $\epsilon>0$ is an arbitrary small number and
\begin{equation}
  H_l(\theta^{\star},\theta)=  \frac{K_v(\theta^{\star},\theta)}{a^{(q+1)}}.
\end{equation}
Proceeding as in the base case, let $\tau>\bar{t}_l$ be the first time-step following $\bar{t}_l$ that belongs to the set $\mathbb{I}$. Noting that $l\in\mathcal{N}_i$, using  \eqref{eqn:update1}, \eqref{eqn:boundl1}, and similar arguments as those used to arrive at \eqref{eqn:upperbnd1}, we obtain:
\begin{equation}
\resizebox{0.8\hsize}{!}{$
\begin{aligned}
\mu_{i,\tau}(\theta)&{\leq}\frac{\mu_{l,\tau-1}(\theta)}{\sum\limits_{p=1}^{m}\min\{\{\mu_{j,\tau-1}(\theta_p)\}_{{j\in\mathcal{N}_i}},\pi_{i,\tau}(\theta_p)\}}\\
&{<}\frac{e^{-(H_l(\theta^{\star},\theta)-\epsilon)(\tau-1)}}{\eta(\omega)}=C_l(\omega)e^{-(H_l(\theta^{\star},\theta)-\epsilon)\tau},
\end{aligned}
$}
\label{eqn:boundl2}
\end{equation}
where 
\begin{equation}
C_l(\omega)=\frac{e^{(H_l(\theta^{\star},\theta)-\epsilon)}}{\eta(\omega)}.
\end{equation}
Repeating the above analysis for each time-step of the form $a^p\tau+f(p),p\in\mathbb{N}_{+}$, using \eqref{eqn:update2}, and following similar arguments as in the base case yields that $\forall t\geq\tau$, 
\begin{equation}
     \mu_{i,t}(\theta) < C_l(\omega)e^{-(H_l(\theta^{\star},\theta)-\epsilon)(a^{\bar{p}(t)}\tau+f(\bar{p}(t)))},
     \label{eqn:boundl3}
 \end{equation}
 where 
 \begin{equation}
 \bar{p}(t)=\floor{
     \bar{g}(t)}, \hspace{2mm} \bar{g}(t)=\frac{\log{\frac{(a-1)t+1}{(a-1)\tau+1}}}{\log a}.
 \end{equation}
Notice that the inequality in \eqref{eqn:boundl3} resembles that in \eqref{eqn:uppbound3}. Thus, the remaining steps can be completed identically as the base case to yield:
\begin{equation}
    \liminf_{t\to\infty}-\frac{\log\mu_{i,t}(\theta)}{t} \geq  \frac{H_l(\theta^{\star},\theta)}{a}-\frac{\epsilon}{a}.
\end{equation}
The induction step, and in turn the proof can be completed by substituting the expression for $H_l(\theta^{\star},\theta)$ in the above inequality and recalling that $d(v,i)=q+1$.  
\end{proof}
We are now in position to prove Theorem \ref{thm:main}.
\begin{proof} (\textbf{Theorem \ref{thm:main}}) Fix a  $\theta\in\Theta\setminus\{\theta^{\star}\}$. Based on condition (i) of the Theorem, $\mathcal{S}(\theta^{\star},\theta)$ is non-empty, and based on condition (ii), there exists a path from each agent $v\in\mathcal{S}(\theta^{\star},\theta)$ to every agent in $\mathcal{V}\setminus\{v\}$. Eq.  \eqref{eqn:asymprate} then follows from Lemma \ref{lemma:main}. By definition of a source set, $K_v(\theta^{\star},\theta)>0, \forall v\in\mathcal{S}(\theta^{\star},\theta)$; \eqref{eqn:asymprate} then implies $\lim_{t\to\infty}\mu_{i,t}(\theta)=0$ a.s., $\forall i\in\mathcal{V}$.
\end{proof}
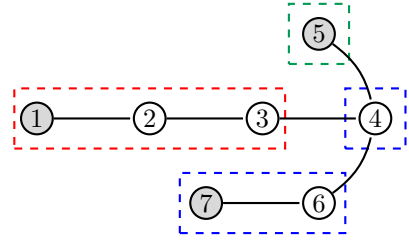
\begin{figure}[t]
\begin{center}
\begin{tikzpicture}
[->,shorten >=1pt,scale=.75,inner sep=1pt, minimum size=12pt, auto=center, node distance=3cm,
  thick, node/.style={circle, draw=black, thick},]
\tikzstyle{block1} = [rectangle, draw, fill=red!10, 
    text width=8em, text centered, rounded corners, minimum height=0.8cm, minimum width=1cm];
\node [circle, draw,fill=gray!30](n1) at (0,0)     (1)  {$1$};
\node [circle, draw](n2) at (2,0)     (2)  {$2$};
\node [circle, draw](n3) at (4,0)     (3)  {$3$};
\node [circle, draw](n4) at (6,0)     (4)  {$4$};
\node [circle, draw, fill=gray!30](n5) at (5,1.5)     (5)  {$5$};
\node [circle, draw](n6) at (5,-1.5)     (6)  {$6$};
\node [circle, draw, fill=gray!30](n7) at (3,-1.5)     (7)  {$7$};
\node (rect) at (2,0) () [draw=red, dashed, minimum width=3.6cm, minimum height=0.8cm] {};
\node (rect) at (4,-1.5) () [draw=blue, dashed, minimum width=2.2cm, minimum height=0.8cm] {};
\node (rect) at (6,0) () [draw=blue, dashed, minimum width=0.8cm, minimum height=0.8cm] {};
\node (rect) at (5,1.5) () [draw=ForestGreen, dashed, minimum width=0.8 cm, minimum height=0.8cm] {};
\draw [-, thick] (1) to (2);
\draw [-, thick] (2) to (3);
\draw [-, thick] (3) to (4);
\draw [-, thick] (7) to (6);
\draw [-,thick] (5) to [bend left=20] (4);
\draw [-,thick] (6) to [bend right=20] (4);
\end{tikzpicture}
\end{center}
\caption{The figure represents the network for the simulation example in Section \ref{sec:sim}. Based on the parameters of the model, Theorem \ref{thm:main} implies that the asymptotic rates of rejection of $\theta_2$ for the agents enclosed in the red, blue and green rectangles are dictated by the relative entropies of agents 1, 7 and 5, respectively, illustrating the agent-specific learning rate phenomenon.}
\label{fig:example}
\end{figure}
\begin{figure}[t]
\begin{center}
\begin{tabular}{cc}
\includegraphics[scale=0.235]{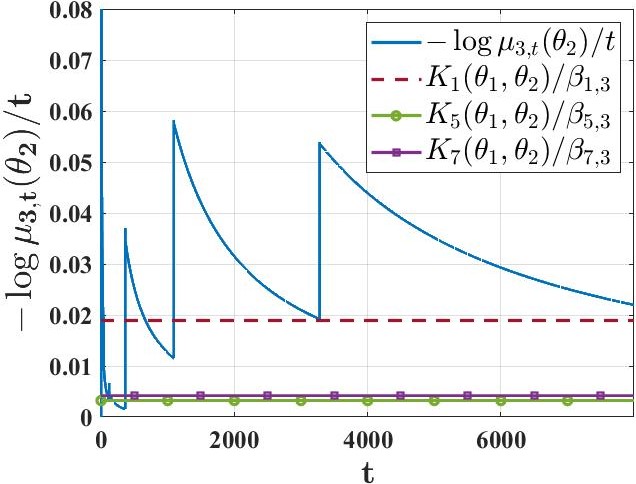}&\includegraphics[scale=0.235]{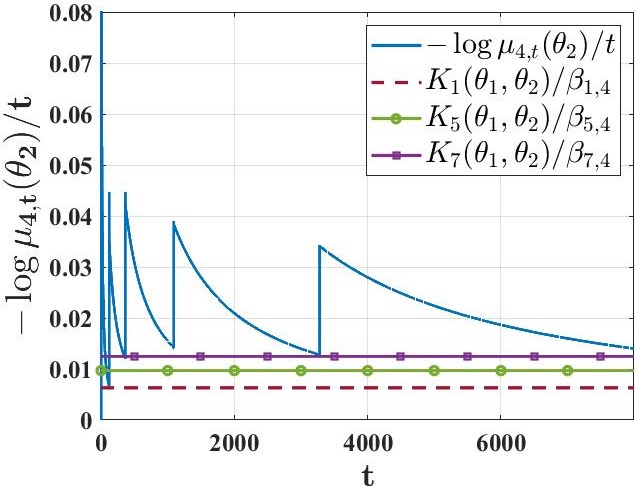}\\
\end{tabular}
\end{center}
\caption{The figure plots the instantaneous rates of decay of the beliefs of agents $3$ and $4$ on the false hypothesis $\theta_2$ (given by $-\log{\mu_{i,t}(\theta_2)}/{t}, i\in\{3,4\}$), for the model described in Section \ref{sec:sim}. The parameter $\beta_{i,j}=a^{(d(i,j)+1)}$ in the above plots represents the factor by which the signal strength of agent $i$ is attenuated at the location of agent $j$.}   
\label{fig:plots}
\end{figure}
\section{Simulation Example}
\label{sec:sim}
Consider a binary hypothesis testing scenario where $\Theta=\{\theta_1,\theta_2\}$, and $\theta_1$ is the true state of the world. The network of agents is depicted in Figure \ref{fig:example}. The signal space for every agent is identical, and given by $\mathcal{S}_i=\{1,2\}, \forall i\in\{1,\ldots,7\}$. The agent likelihood models satisfy: $l_i(1|\theta_1)=0.5,\forall i\in\{1,\dots,7\}, l_1(1|\theta_2)=0.9,  l_5(1|\theta_2)=0.7,l_7(1|\theta_2)=0.85$, and $l_i(1|\theta_2)=0.5, \forall i\in\{2,3,4,6\}.$  Thus, only agents $1, 5$ and $7$ can distinguish between $\theta_1$ and $\theta_2$, with their relative entropies satisfying $K_1(\theta_1,\theta_2)>K_7(\theta_1,\theta_2)>K_5(\theta_1,\theta_2)>0$ (all other agents have $K_i(\theta_1,\theta_2)=0)$. With  $a=3$, we have  $K_7(\theta_1,\theta_2)/{a^2}>K_5(\theta_1,\theta_2)/{a}>K_1(\theta_1,\theta_2)/{a^3}$. Figure \ref{fig:plots} plots the instantaneous rates of rejection of the false hypothesis $\theta_2$ for agents $3$ and $4$, resulting from our proposed algorithm. Based on Figures \ref{fig:example} and \ref{fig:plots}, a few key observations are: (i) each informative agent dominates the speed of learning of agents that are close to it in the network, (ii) the rate of rejection of the false hypothesis is indeed agent-specific, and (iii) the simulation results agree very closely with the theoretical lower bounds on the limiting rates of rejection in Theorem \ref{thm:main}. 

\section{The Impact of Information Allocation on Asymptotic Learning Rates}
\label{sec:infoalloc}
Theorem \ref{thm:main} indicates that the asymptotic learning rates of the agents are shaped by a non-trivial interplay between the relative entropies of their signal models and the structure of the network. In view of this fact, our next goal is to conduct a preliminary analysis of how information should be allocated to the agents in order to maximize appropriate performance metrics that are a function of the asymptotic learning rates. Our investigation is inspired by similar questions in \cite{jad2}; however, as we discuss next, our formulation differs considerably from \cite{jad2}. Specifically, unlike \cite{jad2}, our proposed learning rule leads to asymptotic learning rates that are agent-dependent when $a>1$ (as seen in Section \ref{sec:sim}). Consequently, the performance metrics that we seek to maximize differ from those in \cite{jad2}. As we shall soon see, while the eigenvector centrality plays a key role in shaping the speed of learning in \cite{jad2}, alternate network centrality measures become important when it comes to the belief dynamics generated by our rule. 

To make the above ideas precise, suppose we are given a strongly-connected communication graph $\mathcal{G}$, and a set of $n$ signal structures $\mathcal{L}=\{l_1,\ldots,l_n\}$, where each $l_i$ represents a family of parameterized marginals as defined in Section  \ref{sec:model}. By an allocation of signal structures to agents, we imply a bijection $\psi:\mathcal{L}\rightarrow\mathcal{V}$ between the elements of $\mathcal{L}$ and the elements of the vertex set of $\mathcal{G}$, namely $\mathcal{V}$. Let $\Psi$ represent the set of all possible bijections between the elements of $\mathcal{L}$ and $\mathcal{V}$. Our objective is to optimally pick $\psi\in\Psi$ so as to maximize the performance metrics that we define next. To this end, given a distinct pair of hypotheses $\theta_p,\theta_q\in\Theta$, recall from \eqref{eqn:asymprate} that based on our proposed learning rule,
\begin{equation}
    \rho^{\psi}_i(\theta_p,\theta_q)\triangleq\max_{v\in\mathcal{S}^{\psi}(\theta_p,\theta_q)}\frac{K^{\psi}_v(\theta_p,\theta_q)}{a^{(d(v,i)+1)}}
    \label{eqn:ratealloc}
\end{equation}
lower bounds the limiting rate at which agent $i$ rules out $\theta_q$ when $\theta_p$ is realized as the true state; the superscript $\psi$ reflects the dependence of the corresponding objects on the allocation policy $\psi.$ We now introduce two measures of the quality of learning that are specific to our setting:
\begin{equation}
\resizebox{1\hsize}{!}{$
\rho^{\psi}_{\textrm{avg}}\triangleq \min\limits_{\theta_p,\theta_q\in\Theta} \frac{1}{n}\sum_{i\in\mathcal{V}} \rho^{\psi}_i(\theta_p,\theta_q),
   \rho^{\psi}_{\textrm{min}}\triangleq \min\limits_{\theta_p,\theta_q\in\Theta} \min\limits_{i\in\mathcal{V}} \rho^{\psi}_i(\theta_p,\theta_q).$}
\end{equation}
While $\rho^{\psi}_{\textrm{avg}}$ captures the average rate of learning across the network, $\rho^{\psi}_{\textrm{min}}$ focuses on the agent that converges the slowest; given that any state in $\Theta$ can be realized, these metrics account for the pair of states that are the hardest to tell apart. We seek to maximize $\rho^{\psi}_{\textrm{avg}}$ and $\rho^{\psi}_{\textrm{min}}$ over the set of allocations $\Psi$. Our first result on this topic makes a connection to two popular network centrality measures, namely, the \textit{eccentricity centrality} and the \textit{decay centrality}, defined as follows. For a strongly-connected graph $\mathcal{G}$, the eccentricity centrality $\xi_i$\cite{hage}, and the decay centrality $\kappa_i(\delta)$ \cite{tsakas}, of an agent $i\in\mathcal{V}$ are given by
\begin{equation}
    \xi_i=\frac{1}{\max_{j\in\mathcal{V}\setminus\{i\}}d(i,j)}, \hspace{2mm}
    \kappa_i(\delta)=\sum_{j\in\mathcal{V}\setminus\{i\}}\delta^{d(i,j)},
    \label{eqn:centralities}
\end{equation}
where $0<\delta<1$ is the decay parameter.

The eccentricity centrality is a distance-based centrality measure that aims to find the `center' of a graph such that a process originating at the center minimizes the response time to any other agent. The decay centrality is also a closeness-based centrality measure where an agent is rewarded for being close to other agents, with agents at higher distances contributing less to the centrality as compared to those that are closer. We have the following result.
\begin{proposition} Let $\mathcal{G}$ be strongly-connected. Suppose $a>1$, and let there exist a signal structure $l_u\in\mathcal{L}$ such that the following is true for all  $\theta_p,\theta_q\in\Theta$,
\begin{equation}
    \frac{K_{l_u}(\theta_p,\theta_q)}{a^{\bar{d}(\mathcal{G})}} > K_{l_w}(\theta_p,\theta_q), \forall l_w\in\mathcal{L}\setminus\{l_u\}.\footnote{Here, the quantity  $K_{l_u}(\theta_p,\theta_q)$ should be interpreted differently from $K_{u}(\theta_p,\theta_q)$; whereas the former indicates a relative entropy associated with the signal structure $l_u$, the latter indicates a relative entropy associated with agent $u$ once it has been allocated a certain signal structure (which may not necessarily be $l_u$).}
    \label{eqn:condition}
\end{equation}
Then, (i) any allocation $\psi\in\Psi$ such that $\psi(l_u)\in\argmax_{i\in\mathcal{V}}\xi_i$ maximizes $\rho^{\psi}_{\textrm{min}}$, and (ii) any allocation $\psi\in\Psi$ such that $\psi(l_u)\in\argmax_{i\in\mathcal{V}}\kappa_i(\frac{1}{a})$ maximizes $\rho^{\psi}_{\textrm{avg}}$.
\label{prop:alloc}
\end{proposition}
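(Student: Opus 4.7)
The plan is to first show that under condition \eqref{eqn:condition}, the signal structure $l_u$ single-handedly determines the asymptotic learning rate at every agent for every pair of hypotheses, thereby reducing both optimization problems to purely graph-theoretic ones concerning only the placement of $l_u$.

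\textbf{Step 1 (Dominance of $l_u$).} I would argue that for any allocation $\psi\in\Psi$, any agent $i\in\mathcal{V}$, and any distinct pair $\theta_p,\theta_q\in\Theta$, the maximum defining $\rho^{\psi}_i(\theta_p,\theta_q)$ in \eqref{eqn:ratealloc} is attained at $v=\psi(l_u)$. First, \eqref{eqn:condition} forces $K_{l_u}(\theta_p,\theta_q)>0$ for every distinct pair, so $\psi(l_u)\in\mathcal{S}^{\psi}(\theta_p,\theta_q)$. Second, the contribution of $\psi(l_u)$ to the max is bounded below by $K_{l_u}(\theta_p,\theta_q)/a^{\bar{d}(\mathcal{G})+1}$ since $d(\psi(l_u),i)\leq\bar{d}(\mathcal{G})$, whereas the contribution of any competing source agent $\psi(l_w)$ (with $l_w\neq l_u$) is bounded above by $K_{l_w}(\theta_p,\theta_q)/a$ since $d(\psi(l_w),i)\geq 0$. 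Dividing \eqref{eqn:condition} by $a$ yields $K_{l_u}(\theta_p,\theta_q)/a^{\bar{d}(\mathcal{G})+1} > K_{l_w}(\theta_p,\theta_q)/a$, which is strict dominance of $\psi(l_u)$ even in its worst positioning. Consequently, for every $\psi,i,\theta_p,\theta_q$,
$$\rho^{\psi}_i(\theta_p,\theta_q)=\frac{K_{l_u}(\theta_p,\theta_q)}{a^{d(\psi(l_u),i)+1}}.$$

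\textbf{Step 2 (Reduction to centralities).} With Step 1 in hand, part (i) follows by observing
$$\rho^{\psi}_{\textrm{min}}=\frac{\min_{\theta_p,\theta_q}K_{l_u}(\theta_p,\theta_q)}{a^{\max_{i}d(\psi(l_u),i)+1}},$$
so maximizing $\rho^{\psi}_{\textrm{min}}$ over $\psi$ amounts to minimizing the eccentricity $\max_i d(\psi(l_u),i)$, i.e., placing $l_u$ at an agent maximizing $\xi_i=1/\max_{j\neq i}d(i,j)$. For part (ii),
$$\rho^{\psi}_{\textrm{avg}}=\frac{\min_{\theta_p,\theta_q}K_{l_u}(\theta_p,\theta_q)}{n\,a}\sum_{i\in\mathcal{V}}(1/a)^{d(\psi(l_u),i)},$$
and splitting off the $i=\psi(l_u)$ term (which contributes $1$) rewrites the sum as $1+\kappa_{\psi(l_u)}(1/a)$ by the definition in \eqref{eqn:centralities}. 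Hence maximizing $\rho^{\psi}_{\textrm{avg}}$ over $\psi$ reduces to maximizing $\kappa_i(1/a)$ over $i\in\mathcal{V}$.

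\textbf{Main obstacle.} The only substantive step is the dominance argument in Step 1; once it is established, the rest is essentially bookkeeping against the definitions of the two centrality measures. The delicate point is verifying that the strict inequality in \eqref{eqn:condition} translates uniformly across all pairs of positions of $l_u$ and competing signals in the graph, which is precisely why the diameter $\bar{d}(\mathcal{G})$ appears in the hypothesis: it captures the worst-case distance of $l_u$ from any listening agent, and this worst case must still beat the best possible placement of any other informative signal.
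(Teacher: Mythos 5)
Your proposal is correct and follows essentially the same route as the paper's proof: both rest on the observation that condition \eqref{eqn:condition} forces the maximum in \eqref{eqn:ratealloc} to be attained at the agent holding $l_u$ (the paper's Eq.~\eqref{eqn:domination}, which it states without detail and you justify explicitly via the worst-case distance $\bar{d}(\mathcal{G})$ versus best-case distance $0$), after which both metrics reduce to the eccentricity and decay centralities of $\psi(l_u)$. The only cosmetic difference is that you write closed-form expressions for $\rho^{\psi}_{\textrm{min}}$ and $\rho^{\psi}_{\textrm{avg}}$ and read off monotonicity, whereas the paper compares two candidate allocations pairwise; the substance is identical.
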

\begin{proof}
For part (i), consider two allocations $\psi_1,\psi_2\in\Psi$ such that $\psi_1(l_u)=x_1\in\argmax_{i\in\mathcal{V}}\xi_i$, and $\psi_2(l_u)=x_2$. Based on condition \eqref{eqn:condition}, and \eqref{eqn:ratealloc}, it is easy to see that for any pair $\theta_p,\theta_q\in\Theta$, and for each $i\in\mathcal{V}$:
\begin{equation}
    \rho^{\psi_1}_i(\theta_p,\theta_q)=\frac{{K}^{\psi_1}_{x_1}(\theta_p,\theta_q)}{a^{(d(x_1,i)+1)}}, \rho^{\psi_2}_i(\theta_p,\theta_q)=\frac{{K}^{\psi_2}_{x_2}(\theta_p,\theta_q)}{a^{(d(x_2,i)+1)}}.
    \label{eqn:domination}
\end{equation}
Based on \eqref{eqn:centralities}, we then obtain:
\begin{equation}
\resizebox{1\hsize}{!}{$
\begin{aligned}
    \min_{i\in\mathcal{V}}\rho^{\psi_1}_i(\theta_p,\theta_q)-\min_{i\in\mathcal{V}}\rho^{\psi_2}_i(\theta_p,\theta_q)&=
    \frac{{K}^{\psi_1}_{x_1}(\theta_p,\theta_q)}{a^{(1/ \xi_{x_1}+1)}}- \frac{{K}^{\psi_2}_{x_2}(\theta_p,\theta_q)}{a^{(1/ \xi_{x_2}+1)}}\\
    &=\frac{K_{l_u}(\theta_p,\theta_q)}{a}\left(\frac{1}{a^{1/ \xi_{x_1}}}-\frac{1}{a^{1/ \xi_{x_2}}}\right)\\
    &\geq 0,
\end{aligned}
\label{eqn:allocineq}
$}
\end{equation}
where the second equality follows from the fact that the signal structure of agent $x_1$ under allocation $\psi_1$, and agent $x_2$ under allocation $\psi_2$, are each equal to $l_u$, and the last inequality follows by noting that $\xi_{x_1}\geq\xi_{x_2}$ based on the choice of agent $x_1$. The proof of part (i) then follows by noting that the inequality in \eqref{eqn:allocineq} holds for every pair $\theta_p,\theta_q\in\Theta.$ For part (ii), we proceed as in part (i) and compare two allocations $\psi_1,\psi_2\in\Psi$ such that $\psi_1(l_u)=x_1\in\argmax_{i\in\mathcal{V}}\kappa_i(\frac{1}{a})$, and $\psi_2(l_u)=x_2$. The equalities in \eqref{eqn:domination} hold once again, and combined with \eqref{eqn:centralities} lead to:
\begin{equation}
    \frac{1}{n}\sum_{i\in\mathcal{V}} \rho^{\psi_j}_i(\theta_p,\theta_q)=\frac{K^{\psi_j}_{x_j}(\theta_p,\theta_q)}{an}\left(1+\kappa_{x_j}\left(\frac{1}{a}\right)\right),
\end{equation}
where $j\in\{1,2\}$. The proof can be completed as in part (i) by noting that $\kappa_{x_1}(\frac{1}{a})\geq\kappa_{x_2}(\frac{1}{a})$.
\end{proof}

The intuition behind the above result is simple, and as follows. Suppose there exists a signal structure that is sufficiently stronger in its discriminatory power than the others  w.r.t. every pair of hypotheses. Then, the agent allocated such a structure will govern the rate of learning of every other agent in the network. To expedite learning, it thus makes sense to allocate such a dominant signal structure to the most central agent in the network (where the specific centrality measure depends on the performance metric).  
\begin{remark}
We point out that eccentricity centrality and decay centrality have been widely studied in the context of information spread over social and economic networks \cite{jalili,jackson,chatterjee}. For instance, while the former  bears connections to information cascades \cite{jalili}, the latter facilitates the selection of an ``implant" node that maximizes the diffusion of a certain product or idea over a network \cite{chatterjee}. Proposition \ref{prop:alloc} identifies conditions under which the above centrality measures have similar implications for the belief dynamics generated by our proposed learning rule.  
\end{remark}

While  Proposition \ref{prop:alloc} allows one to identify the optimal allocation strategy by simply computing the appropriate centrality measures, the scenario becomes much more complicated if no additional structure is imposed either on the network or on the agents' likelihood models. For such general cases, we provide a coarse upper bound on the suboptimality of any given allocation.

\begin{proposition}
Let $\mathcal{G}$ be strongly-connected.
 Suppose $a>1$, and let $\psi^{\star}_{\alpha}\in\Psi$ and $\psi^{\star}_{\beta}\in\Psi$ be allocations that maximize $\rho^{\psi}_{\textrm{min}}$ and $\rho^{\psi}_{\textrm{avg}}$, respectively.  Then, for any allocation $\psi\in\Psi$,
\begin{equation}
    \frac{\rho^{\psi^{\star}_{\alpha}}_{\textrm{min}}}{\rho^{\psi}_{\textrm{min}}} \leq a^{\bar{d}(\mathcal{G})}, \hspace{2mm} 
    \frac{\rho^{\psi^{\star}_{\beta}}_{\textrm{avg}}}{\rho^{\psi}_{\textrm{avg}}} \leq a^{\bar{d}(\mathcal{G})}.
    \label{eqn:deviation}
\end{equation}
\label{prop:deviation}
\end{proposition}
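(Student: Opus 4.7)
The plan is to reduce both inequalities in \eqref{eqn:deviation} to a single pointwise comparison between any two allocations. Specifically, I would first establish the following key claim: for every $\psi_1, \psi_2 \in \Psi$, every pair of distinct hypotheses $\theta_p, \theta_q \in \Theta$, and every agent $i \in \mathcal{V}$,
\begin{equation*}
\rho^{\psi_1}_i(\theta_p,\theta_q) \;\leq\; a^{\bar{d}(\mathcal{G})}\, \rho^{\psi_2}_i(\theta_p,\theta_q).
\end{equation*}
The argument for this claim exploits the fact that each allocation is a bijection between $\mathcal{L}$ and $\mathcal{V}$. Let $v^{\star} \in \mathcal{S}^{\psi_1}(\theta_p,\theta_q)$ attain the maximum in the definition \eqref{eqn:ratealloc} of $\rho^{\psi_1}_i(\theta_p,\theta_q)$, and let $v'$ denote the unique agent to which the signal structure $\psi_1^{-1}(v^{\star}) \in \mathcal{L}$ is assigned under $\psi_2$. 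Since $v^{\star}$ and $v'$ carry the same signal structure, $v' \in \mathcal{S}^{\psi_2}(\theta_p,\theta_q)$ and $K^{\psi_2}_{v'}(\theta_p,\theta_q) = K^{\psi_1}_{v^{\star}}(\theta_p,\theta_q)$. Plugging $v'$ into the max defining $\rho^{\psi_2}_i(\theta_p,\theta_q)$, and using the uniform bounds $d(v',i) \leq \bar{d}(\mathcal{G})$ and $d(v^{\star},i) \geq 0$, then yields the claimed inequality with the ratio $a^{d(v',i)-d(v^{\star},i)} \leq a^{\bar{d}(\mathcal{G})}$.

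With this claim in hand, each bound in \eqref{eqn:deviation} follows from a short monotonicity argument. For the $\rho^{\psi}_{\textrm{min}}$ bound, I would let $(\theta_p^{0},\theta_q^{0})$ attain the outer min in the definition of $\rho^{\psi}_{\textrm{min}}$, pick $i^{0}$ achieving the inner min $\min_{i\in\mathcal{V}}\rho^{\psi}_i(\theta_p^{0},\theta_q^{0})$, and apply the pointwise claim with $\psi_1=\psi^{\star}_{\alpha}$ and $\psi_2=\psi$ to obtain
\begin{equation*}
\rho^{\psi^{\star}_{\alpha}}_{\textrm{min}} \;\leq\; \rho^{\psi^{\star}_{\alpha}}_{i^{0}}(\theta_p^{0},\theta_q^{0}) \;\leq\; a^{\bar{d}(\mathcal{G})}\,\rho^{\psi}_{i^{0}}(\theta_p^{0},\theta_q^{0}) \;=\; a^{\bar{d}(\mathcal{G})}\,\rho^{\psi}_{\textrm{min}}.
\end{equation*}
For the $\rho^{\psi}_{\textrm{avg}}$ bound, I would instead sum the pointwise claim over $i\in\mathcal{V}$, divide by $n$, and then take the outer min over pairs $(\theta_p,\theta_q)$, using the analogous pair attaining $\rho^{\psi}_{\textrm{avg}}$ to conclude.

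The main (and essentially only) technical point is the bijection step: recognizing that any source used by $\psi_1$ has a counterpart under $\psi_2$ carrying identical relative entropies, and that the worst-case relocation of this source costs at most a factor of $a^{\bar{d}(\mathcal{G})}$ in the attenuated rate. Everything else is routine manipulation of $\min$ and $\sum$. It is worth remarking that the exponent $\bar{d}(\mathcal{G})$ is the best one can hope for via such an allocation-agnostic argument, since the worst case pits a dominating source located at $i$ itself under $\psi^{\star}_{\alpha}$ (or $\psi^{\star}_{\beta}$) against the same source displaced to a diametrically opposite vertex under $\psi$.
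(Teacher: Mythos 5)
Your proposal is correct and follows essentially the same route as the paper's proof: the pointwise claim $\rho^{\psi_1}_i(\theta_p,\theta_q)\leq a^{\bar{d}(\mathcal{G})}\rho^{\psi_2}_i(\theta_p,\theta_q)$, obtained by tracking the dominating signal structure under one allocation to its image under the other via the bijection, is exactly the paper's per-agent bound $a^{d(v_2,i)-d(v_1,i)}\leq a^{\bar{d}(\mathcal{G})}$, and the reduction to the worst-case hypothesis pair for $\psi$ matches as well. The only (cosmetic) differences are that you spell out both inequalities while the paper proves only the average-rate one, and you sum the pointwise inequality directly rather than bounding a ratio of sums.
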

\begin{proof}
We only prove the second inequality in \eqref{eqn:deviation} since the first follows from similar arguments. Consider any $\psi\in\Psi$, and suppose the pair $(\theta_m,\theta_n)$ minimizes $\rho^{\psi}_{\textrm{avg}}$ for this allocation. The following inequality is then apparent from the definition of the quantities involved:
\begin{equation}
    \frac{\rho^{\psi^{\star}_{\beta}}_{\textrm{avg}}}{\rho^{\psi}_{\textrm{avg}}} \leq \frac{\sum_{i\in\mathcal{V}} \rho^{\psi^{\star}_{\beta}}_i(\theta_m,\theta_n)}{\sum_{i\in\mathcal{V}} \rho^{\psi}_i(\theta_m,\theta_n)}.
    \label{eqn:interimbound}
\end{equation}
Now fix an agent $i$, and suppose that under the allocation $\psi^{\star}_{\beta}$, the signal structure that governs the quantity $\rho^{\psi^{\star}_{\beta}}_i(\theta_m,\theta_n)$ (i.e., the structure that maximizes the right hand side of \eqref{eqn:ratealloc}) is $l_u$. Suppose $l_u$ is allocated to agents $v_1$ and $v_2$ under $\psi^{\star}_{\beta}$ and $\psi$, respectively. An inspection of \eqref{eqn:ratealloc} then reveals:
\begin{equation}
     \frac{\rho^{\psi^{\star}_{\beta}}_i(\theta_m,\theta_n)}{\rho^{\psi}_i(\theta_m,\theta_n)} \leq a^{d(v_2,i)-d(v_1,i)} \leq a^{\bar{d}(\mathcal{G})}.
\end{equation}
The above bound applies to every agent $i\in\mathcal{V}$, and hence, substituting it in \eqref{eqn:interimbound} leads to the desired result. 
\end{proof}
\section{Conclusion}
We developed and analyzed a simple time-triggered protocol for achieving communication-efficient non-Bayesian learning over a network. Unlike existing approaches, we allowed the inter-communication intervals to grow unbounded over time at an arbitrarily large (but finite) geometric rate $a\geq1$. We showed that despite such sparse communication, our approach enables each agent to learn the true state exponentially fast with probability 1. We then characterized the limiting error exponents of the agents as a function of the primitives of our model and the parameter $a$. For the special case when communication occurs at every time-step, i.e., when $a=1$, we proved that our approach yields strictly better asymptotic learning rates than those existing in the literature. Finally, for $a>1$, we studied the impact of signal allocations on the speed of learning. As future work, we plan to explore event-triggered rules for the problem under consideration, and investigate in more detail the aspect of information allocation initiated in Section \ref{sec:infoalloc}. 
\bibliographystyle{IEEEtran} 
\bibliography{refs}
\end{document}